%% file: main.tex
\documentclass{article}

\usepackage{spconf}
\usepackage[T1]{fontenc}
\usepackage{amsmath,amssymb,mathtools,amsthm}
\usepackage{bm}
\usepackage{booktabs}
\usepackage{cite}
\usepackage{graphicx}
\usepackage{microtype}
\usepackage{tikz}
\usetikzlibrary{arrows.meta,positioning,shapes.geometric}
\usepackage[hidelinks]{hyperref}

\newcommand{\CC}{\mathbb C}
\newcommand{\RR}{\mathbb R}
\newcommand{\HH}{\mathsf H}
\newcommand{\TT}{\mathsf T}

\newcommand{\CN}{\mathcal{CN}}
\newcommand{\Tr}{\operatorname{Tr}}
\newcommand{\diag}{\operatorname{diag}}
\newcommand{\col}{\operatorname{col}}
\newcommand{\rank}{\operatorname{rank}}

\newcommand{\calI}{\mathcal I}

\newtheorem{proposition}{Proposition}

\newtheorem{remark}{Remark}

\title{Communication-Centric Symbol-Level Precoding for\\
Fronthaul-Limited Networked ISAC}

\name{\shortstack{Shu Cai$^{\star}$, Amin Li$^{\star}$,
Yiming Zhang$^{\star}$, Jun Zhang$^{\star}$, Qi Zhang$^{\star}$, and Ya-Feng Liu$^{\dagger}$}
}
\address{ $^{\star}$College of Telecommunications and Information Engineering, \\ Nanjing University of Posts and Telecommunications, Nanjing, China\\ 
$^{\dagger}$School of Mathematical Sciences, Beijing University of Posts and Telecommunications, Beijing, China}

\begin{document}

\ninept
\maketitle
\begin{abstract}
This paper studies communication-centric symbol-level precoding (SLP)
for fronthaul-limited networked integrated sensing and communication
(ISAC). SLP is attractive for exploiting
constructive interference, but its data-dependent nature makes fronthaul-aware waveform
design difficult because the transmit covariance used in conventional
linear-precoding compression models is unavailable before the symbol
block is formed. To address this issue, we develop a compression-aware
SLP framework that relates the fronthaul requirement to the realized
block energy. We further formulate a sensing-oriented joint design. To
solve the resulted high-dimensional and nonconvex problem, we
propose a tractable lower-bound method with a performance bound.
Simulations show that the proposed SLP
scheme improves sensing performance over linear-beamforming baselines
under fronthaul constraints.
\end{abstract}

\begin{keywords}
Networked integrated sensing and communication, symbol level precoding,
fronthaul compression.
\end{keywords}

\input{sections/00_introduction}
\input{sections/01_system_model}
\input{sections/02_problem_formulation}
\input{sections/03_problem_reformulation}
\input{sections/03_slbm_cvx}
\input{sections/03_sdr_upper_bound}
\input{sections/06_simulations}

\bibliographystyle{IEEEbib}
\bibliography{references}

\end{document}

%% file: sections/00_introduction.tex
\section{Introduction}
\label{sec:introduction}

Integrated sensing and communication (ISAC) is a key physical-layer
direction for future wireless networks, enabling shared radio
infrastructure to serve users and sense the environment, while
networked ISAC coordinates distributed transmitters through a central
processor (CP) for spatial diversity and coherent sensing gains
\cite{liu2020joint,liu2022integrated,zhang2025networked}.

Early ISAC waveform and beamforming studies characterized the
sensing--communication tradeoff under co-located or coordinated
architectures, often assuming non-capacity-limited transport
\cite{liu2017robust,liu2020joint,liu2022integrated,xu2023coordinated,
li2024networked,zhao2024joint,liu2022cramer}. For practical
networked ISAC, the CP sends downlink waveforms to transmitters and
receives compressed sensing observations, so fronthaul compression
affects both the radiated waveform and sensing noise. Compression-aware
transmission has been studied in cloud radio access networks (C-RANs)
and compressing-relay networks
\cite{peng2015fronthaul,zhang2021fronthaul,liu2021uplinkdownlink,fan2025qos},
and the works \cite{zhang2025networked,zhang2025optimal} introduced capacity-limited fronthaul into networked ISAC by jointly optimizing compression and block-level precoding (BLP).

Symbol-level precoding (SLP) adapts waveforms to instantaneous
data symbols and steers interference into
constructive-interference (CI) regions
\cite{masouros2015constructive,liTutorialInterference2020a}. Recent ISAC
studies show that SLP can improve joint sensing-communication performance by
reusing communication symbols as waveform degrees of freedom
\cite{chen2024symbol,ShuDistributedADMMBased2026}. Motivated by these
advantages, we apply SLP to fronthaul-limited networked ISAC.


This extension faces two main challenges. First, the data-dependent SLP waveform and the downlink compression-noise variance are circularly coupled: SLP design requires the compression-noise variance, whereas this variance depends on the realized SLP block energy. Second, jointly optimizing the SLP waveforms and fronthaul compression variables leads to a high-dimensional nonconvex problem spanning all antennas and symbol intervals.

The main contributions of this paper are threefold. First, we derive a distribution-free upper bound on the SLP fronthaul rate using Gaussian maximum entropy~\cite{cover2006elements} and instantiate it with the realized block-average waveform energy. The bound requires no prescribed transmit covariance and reduces to the conventional BLP rate formula in~\cite{zhang2025networked} for temporally white Gaussian data. 
Second, we formulate a compression-aware sensing signal-to-interference-plus-noise ratio (SINR) maximization problem and recast it as an equivalent reduced second-order-cone (SOC)-constrained form. Third, we develop an SLP successive lower-bound maximization (SLP-SLBM) algorithm that yields a feasible lower-bound design and an SOC-strengthened SLP semidefinite relaxation (SLP-SDR) that provides a global upper bound, allowing the solution quality to be assessed through their gap. Simulations show that SLP-SLBM performance remains close to this upper bound and outperforms the BLP baselines.


%% file: sections/01_system_model.tex
\section{System and Signal Model}
\label{sec:system}

We consider the communication-centric networked ISAC architecture in
Fig.~\ref{fig:system}, following \cite{zhang2025networked}. A CP coordinates \(L\) distributed ISAC transmitters (TXs),
one sensing receiver (RX), and \(K\) single-antenna communication users.
The CP generates a block of
communication-centric SLP waveforms, conveys the waveforms to the TXs
over downlink fronthaul links, and jointly exploits the resulting
transmissions for user service and target illumination. The sensing RX
compresses its target echo and returns it to the CP over an uplink
fronthaul link for coherent detection.

\begin{figure}[t]
  \centering
  \includegraphics[width=0.83\columnwidth]{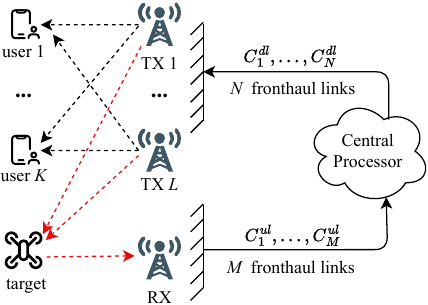}
  \caption{The fronthaul-limited networked ISAC.}
  \label{fig:system}
\end{figure}
\subsection{Communication Signal Model}
Each TX has \(N_t\) antennas, so the total number of transmit antennas
is \(N=LN_t\). For a block of \(T\) known \(\mathcal{M}\)-ary phase-shift keying
(PSK) symbol vectors, the CP designs the
symbol-level transmit vector
\(\bm{x}(t)\in\CC^N\) and collects
\(\bm X\triangleq[\bm x(1),\ldots,\bm x(T)]\in\CC^{N\times T}\) \cite{liTutorialInterference2020a}. The CP compresses each waveform sample before sending it to the corresponding TX antenna. We model this
downlink fronthaul operation by the Gaussian test channel \cite{zhang2025networked,zhang2025optimal}
\begin{equation}
  \bar{\bm{x}}(t)=\bm{x}(t)+\bm e^{\rm dl}(t),\qquad
  \bm e^{\rm dl}(t)\sim\CN(\bm0,\bm Q_{\rm dl}),
  \label{eq:dl-test}
\end{equation}
where \(\bm Q_{\rm dl}=\diag(q_1^{\rm dl},\ldots,q_N^{\rm dl})\).
The covariance of the actually radiated signal is therefore
\begin{align}
  \bm R
    = \bm X\bm X^\HH/T+\bm Q_{\rm dl}.
  \label{eq:R}
\end{align}
Let \(\bm h_k\in\CC^N\) be the channel from all TX antennas to user
\(k\) and \(v_k(t)\sim\CN(0,\sigma_v^2)\) the receiver noise. The received sample
is
\begin{equation}
  y_k(t)
  =\bm h_k^\HH\bm x(t)
   +\bm h_k^\HH\bm e^{\rm dl}(t)+v_k(t).
  \label{eq:user-rx}
\end{equation}
The CP controls the noiseless term \(\bm h_k^\HH\bm x(t)\)  and the random downlink compression term contributes the variance \(\bm h_k^\HH\bm Q_{\rm dl}\bm h_k\). 
\subsection{Radar Signal Model}
For a prescribed angle--range cell,
we adopt the rank-one line-of-sight target channel \cite{zhang2025networked}
\begin{equation}
  \bm G
  =\bm a_r\bm a_t^\HH\bm\Sigma_g,\quad
  \bm\Sigma_g
  =\diag(g_1,\ldots,g_L)\otimes\bm I_{N_t},
  \label{eq:G}
\end{equation}
where \(\bm a_t\) and \(\bm a_r\) are the transmit and receive steering
vectors. The sensing RX observes
\begin{equation}
  \bm y^{\rm s}(t)
  =\bm G\bar{\bm x}(t)+\bm z(t),\quad
  \bm z(t)\sim\CN(\bm0,\sigma_z^2\bm I_M),
  \label{eq:sensing-rx}
\end{equation}
where $\bm z(t)$ is the receiver noise.
Before forwarding the echo to the CP, the RX applies an uplink Gaussian
test channel,
\begin{align}
  \bar{\bm y}^{\rm s}(t)
    =\bm y^{\rm s}(t)+\bm e^{\rm ul}(t),\quad \bm e^{\rm ul}(t)
    \sim\CN(\bm0,\bm Q_{\rm ul}),
  \label{eq:Qul}
\end{align}
where \(\bm Q_{\rm ul}=\diag(q_1^{\rm ul},\ldots,q_M^{\rm ul})\).
Thus, the effective sensing-noise covariance is
\(\bm R_{\rm eff}=\bm Q_{\rm ul}+\sigma_z^2\bm I_M\).



%% file: sections/02_problem_formulation.tex
\section{Problem Formulation}
\label{sec:problem}

This section first specifies the communication and radar performance
metrics and then states the joint design problem.

\subsection{Compression-Aware Communication SLP Metric}

Rotating the noiseless received sample by the
desired PSK symbol \(s_k(t)\) gives
  $z_k(t)\triangleq \bm h_k^\HH\bm x(t)s_k^*(t).$
Following the standard constructive-interference geometry of SLP, the compression-aware decision-region constraint is \cite{liTutorialInterference2020a,ShuDistributedADMMBased2026}
\begin{equation}
  |\Im\{z_k(t)\}|
  \le
  \left(\Re\{z_k(t)\}
  -\sqrt{\Gamma_k\!\big(
    \bm h_k^\HH\bm Q_{\rm dl}\bm h_k+\sigma_v^2
  \big)}\right)\tan\phi,
  \label{eq:ci-region}
\end{equation}
where \(\Gamma_k\) is the communication SINR target and
\(\phi=\pi/\mathcal{M}\).
The square-root term accounts jointly for downlink compression noise and
receiver noise.

\subsection{Radar Detection Metric}

For a spatial combiner \(\bm w\), define the
detection SINR
\begin{equation}
  \gamma_{\rm s}(\bm w)
  = 
  (\bm w^\HH\bm G\bm R\bm G^\HH\bm w)
       (\bm w^\HH\bm R_{\rm eff}\bm w)^{-1}.
  \label{eq:gamma-w}
\end{equation}
By the rank-one target channel in \eqref{eq:G},
  $\bm G\bm R\bm G^\HH
  =\calI(\bm R)\bm a_r\bm a_r^\HH$,
where \(\calI(\bm R)\) denotes the target illumination, i.e.,
\begin{equation}
  \calI(\bm R)
  =
  \bm a_t^\HH\bm\Sigma_g\bm R\bm\Sigma_g^\HH\bm a_t.
  \label{eq:illumination-R}
\end{equation}
Thus, \eqref{eq:gamma-w} is maximized by 
\(\bm w^\star\propto\bm R_{\rm eff}^{-1}\bm a_r\), yielding
 the optimized sensing SINR
\begin{equation}
  \operatorname{SINR}^{\rm s}
  =\calI(\bm R)\,
   \bm a_r^\HH
   (\bm Q_{\rm ul}+\sigma_z^2\bm I_M)^{-1}\bm a_r.
  \label{eq:sinr-opt}
\end{equation}
The conventional point-target detection model below motivates
\eqref{eq:sinr-opt} as the radar objective.

\begin{remark}[Conventional radar interpretation]
\label{rem:radar-meaning}
For a fixed false-alarm probability \(P_{\rm FA}\), the
detection probability is
  $ P_{\rm D}
  =1-F_{\chi_2^2(\rho_T)} (
    F_{\chi_2^2}^{-1}(1-P_{\rm FA}))$,
where \(F_{\chi_2^2}\) and \(F_{\chi_2^2(\rho_T)}\) are the cumulative distribution functions (CDFs) of
central and noncentral chi-squared variables with two degrees of freedom  \cite{liu2017robust}.
 Since
\(P_{\rm D}\) increases monotonically with \(\rho_T\) and \(\rho_T\propto T\gamma_{\rm s}(\bm w)\)
\cite{bekkerman2006target}, increasing
\eqref{eq:gamma-w}, and its optimized value \eqref{eq:sinr-opt},
improves the point-target detection probability.
\end{remark}

\input{sections/02a_fronthaul_rate_model}

\subsection{Joint Design Problem}

For common per-stream capacities \(C^{\rm dl}\) and \(C^{\rm ul}\), the considered
communication-centric ISAC joint design problem is
\begingroup
\setlength{\jot}{1pt}
\begin{subequations}
\label{prob:original}
\begin{align}
  \underset{\substack{\bm X,\{q_n^{\rm dl}\ge0\},\\
                       \{q_m^{\rm ul}\ge0\}}}
  {\operatorname{max}}~&
  \operatorname{SINR}^{\rm s}
  \label{prob:original-obj}\\
  \operatorname{s.t.}~\quad&
  \eqref{eq:ci-region},\notag \\
  &\log_2\!\big(
    1+\|\bm x_n\|_2^2/(Tq_n^{\rm dl})
  \big)
  \le C^{\rm dl},\quad \forall n,
  \label{eq:dl-fh}\\
  &\log_2\!\big(
    1+{(\calI(\bm R)/M+\sigma_z^2)}/{q_m^{\rm ul}}
  \big)
  \le C^{\rm ul},\quad \forall m,
  \label{eq:ul-fh}\\
  &\|\bm X\|_F^2/T
   +\sum_{n=1}^{N}q_n^{\rm dl}
  \le P_{\max}.
  \label{eq:power-constraint}
\end{align}
\end{subequations}
\endgroup
The constraints enforce, respectively, the CI regions for SLP, the per-stream
downlink and uplink fronthaul capacities \cite{liu2021uplinkdownlink,zhang2025networked}, and
the total transmit-power budget. The uplink rate uses the fact
\(a_{r,m}=\tfrac{1}{\sqrt{M}} e^{-\jmath\pi\cos\theta_r}\) with \(\theta_r\) being the direction of arrival.

%% file: sections/02a_fronthaul_rate_model.tex
\subsection{Downlink Fronthaul Rate Model}
\label{subsec:time-average}

The downlink test channel in \eqref{eq:dl-test} requires a fronthaul
rate no smaller than \(T^{-1}I(\bm x_n;\bar{\bm x}_n)\) for antenna
\(n\), where \(\bm x_n=[x_n(1),\ldots,x_n(T)]^\TT\) is the source
block, \(\bm e_n\sim\CN(\bm0,q_n^{\rm dl}\bm I_T)\) is independent
of \(\bm x_n\), and \(\bar{\bm x}_n=\bm x_n+\bm e_n\). Unlike
linear precoding, the SLP block is data-dependent and has no prescribed
distribution. We therefore use the following distribution-free upper
bound.

\begin{proposition}[Distribution-free block-rate bound]
\label{prop:block-rate}
For every distribution of \(\bm x_n\) with finite second moment, we have
\begin{align}
  \frac{1}{T}I(\bm x_n;\bar{\bm x}_n)
  &\le \frac{1}{T}\log_2\det\!\Big(
    \bm I_T+\frac{\bm K_n}{q_n^{\rm dl}}\Big) \le\! \log_2\!\Big(
    1+\frac{\mathbb E\|\bm x_n\|_2^2}
             {Tq_n^{\rm dl}}\Big),
  \label{eq:rate-energy-bound}
\end{align}
where \(\bm K_n=\operatorname{Cov}(\bm x_n)\). Both inequalities are
equalities when \(\bm x_n\) is zero-mean proper Gaussian with covariance
\(P_n\bm I_T\).
\end{proposition}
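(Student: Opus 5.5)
We prove the two inequalities separately, via the Gaussian maximum-entropy principle and then the AM--GM (concavity of \(\log\det\)) inequality.

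\emph{First inequality.} Write \(I(\bm x_n;\bar{\bm x}_n)=h(\bar{\bm x}_n)-h(\bar{\bm x}_n\mid\bm x_n)\). Since \(\bm e_n\) is independent of \(\bm x_n\), the conditional term is \(h(\bm e_n)=\log_2\det(\pi e\, q_n^{\rm dl}\bm I_T)\). Independence also gives
\[
\operatorname{Cov}(\bar{\bm x}_n)=\bm K_n+q_n^{\rm dl}\bm I_T,
\]
which is positive definite, so \(\bar{\bm x}_n\) has a density. Its covariance is finite by the second-moment assumption. The maximum-entropy theorem \cite{cover2006elements} bounds the entropy of any complex vector with a given covariance by that of a proper Gaussian with the same covariance. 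Hence
\[
h(\bar{\bm x}_n)\le\log_2\det\!\big(\pi e(\bm K_n+q_n^{\rm dl}\bm I_T)\big).
\]
Subtracting the conditional term and dividing by \(T\) gives the first inequality. One technicality needs checking: the bound for improper or non-zero-mean vectors. Translation does not change entropy, and an improper vector has entropy no larger than the proper Gaussian with the same Hermitian covariance. So using \(\operatorname{Cov}\) rather than the second-moment matrix is legitimate. This is the step needing the most care, but it is standard.

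\emph{Second inequality.} Let \(\lambda_1,\dots,\lambda_T\ge0\) be the eigenvalues of \(\bm K_n\). Then
\[
\frac1T\log_2\det\!\Big(\bm I_T+\frac{\bm K_n}{q_n^{\rm dl}}\Big)=\frac1T\sum_t\log_2\!\Big(1+\frac{\lambda_t}{q_n^{\rm dl}}\Big)\le\log_2\!\Big(1+\frac{\Tr\bm K_n}{Tq_n^{\rm dl}}\Big)
\]
by Jensen's inequality, since \(\log\) is concave. Finally, \(\Tr\bm K_n=\mathbb E\|\bm x_n\|_2^2-\|\mathbb E\bm x_n\|_2^2\le\mathbb E\|\bm x_n\|_2^2\), and the logarithm is monotone, which gives the second inequality.

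\emph{Equality case.} Suppose \(\bm x_n\sim\CN(\bm0,P_n\bm I_T)\). Then \(\bar{\bm x}_n\) is proper Gaussian, so the maximum-entropy bound is tight. All eigenvalues equal \(P_n\), so Jensen's inequality is tight. The mean is zero, so \(\Tr\bm K_n=\mathbb E\|\bm x_n\|_2^2=TP_n\). Both sides therefore reduce to \(\log_2(1+P_n/q_n^{\rm dl})\), which is the conventional BLP rate.
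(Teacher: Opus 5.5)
Your proof is correct and follows the paper's argument: the Gaussian maximum-entropy bound gives the first inequality, and AM--GM (your Jensen step on the eigenvalues of \(\bm K_n\)) together with \(\Tr(\bm K_n)\le\mathbb E\|\bm x_n\|_2^2\) gives the second; you also fill in details the paper leaves implicit, namely the entropy decomposition, the improper and nonzero-mean cases, and the equality case. One small correction: \(\bar{\bm x}_n\) has a density because it is \(\bm x_n\) plus independent nondegenerate Gaussian noise (a convolution with a Gaussian density), not because its covariance is positive definite, which alone does not imply a density.
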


\begin{proof}
The first inequality follows from the Gaussian maximum-entropy bound
\cite{cover2006elements}; the second from the
arithmetic-geometric mean inequality and
\(\Tr(\bm K_n)\le\mathbb E\|\bm x_n\|_2^2\).
\end{proof}


For a realized SLP block, we use
  $\hat P_{n,T}\triangleq \|\bm x_n\|_2^2/T$
to approximate \(\mathbb E\|\bm x_n\|_2^2/T\). This sample average
converges to the ensemble power for a stationary ergodic source as
\(T\) grows.

%% file: sections/03_problem_reformulation.tex
\section{Problem Reformulation and Simplification}
\label{sec:reformulation}

Problem \eqref{prob:original} is nonconvex due to the quadratic term in the objective. We first reformulate it into a more tractable form.

\subsection{Elimination of Uplink Compression}

Define $\beta = (2^{C^{\rm ul}}-1)^{-1}$.
For fixed \(\bm R\), the objective \eqref{eq:sinr-opt} decreases in every
\(q_m^{\rm ul}\). Hence all uplink constraints in \eqref{eq:ul-fh} are
active at an optimum, and
\begin{equation}
  q_m^{{\rm ul},\star}
  =\beta\bigl(\calI(\bm R)/M+\sigma_z^2\bigr),
  \quad m=1,\ldots,M.
  \label{eq:qul-star}
\end{equation}
Thus, 
\(\bm Q_{\rm ul}^{\star}=q^{{\rm ul},\star}\bm I_M\), and the 
detection SINR becomes
\begin{align}
  \operatorname{SINR}^{\rm s}
    &= \frac{\calI(\bm R)}
      {(\beta/M)\calI(\bm R)+(1+\beta)\sigma_z^2},
      \quad \calI(\bm R)\ge0.
  \label{eq:fI}
\end{align}
This expression is strictly increasing in \(\calI(\bm R)\), so the
design exactly reduces to illumination maximization. The uplink fronthaul imposes the asymptotic ceiling
\(\lim_{\calI\rightarrow\infty}\operatorname{SINR}^{\rm s}=M/\beta\).

\subsection{Equivalent Reduced SOC Formulation}

Set \(\alpha_{\rm dl}=2^{C^{\rm dl}}-1\),
\(p_n=\sqrt{q_n^{\rm dl}}\), and let 
\(\bm p=[p_1,\ldots,p_N]^\TT\).
The downlink fronthaul constraint \eqref{eq:dl-fh} is equivalent to
\begin{equation}
  \|\bm x_n\|_2
  \le\sqrt{T\alpha_{\rm dl}}\,p_n,
  \qquad n=1,\ldots,N.
  \label{eq:fh-soc}
\end{equation}

For user \(k\), defining $g_{k,t}^{\pm}(\bm x(t))
  =
  \Re\{z_k(t)\}\pm\Im\{z_k(t)\}\cot\phi$ and 
\(\bm D_k=\sqrt{\Gamma_k}\diag(|\bm h_k|)\), then $\Gamma_k(\bm h_k^\HH\bm Q_{\rm dl}\bm h_k+\sigma_v^2) = \|\col(\bm D_k\bm p,\\\sqrt{\Gamma_k}\sigma_v)\|_2^2$. 
The constraint \eqref{eq:ci-region} is exactly
\begin{align}\label{eq:ci-soc}
  \|\col(\bm D_k\bm p,\sqrt{\Gamma_k}\sigma_v)\|_2
    \le g_{k,t}^{\pm}(\bm x(t)),
  \quad \forall k,t.
\end{align}
Let \(\bm b=\bm\Sigma_g^\HH\bm a_t\) and
\(\bm D_b=\diag(|b_1|^2,\ldots,|b_N|^2)\). 
Substitution of \eqref{eq:R} into \eqref{eq:illumination-R} gives the
convex quadratic illumination
\begin{equation}
  \calI(\bm X,\bm p)
  =\|\bm X^\HH\bm b\|_2^2/T
   +\bm p^\TT\bm D_b\bm p.
  \label{eq:I-Xp}
\end{equation}
For compactness, write \(\bm v(\bm X,\bm p)=
\col(\operatorname{vec}(\bm X)/\sqrt T,\bm p)\). Problem \eqref{prob:original} is equivalent to
\begin{subequations}
  \label{prob:pred}
\begin{align}
  \max_{\bm X,\bm p,\bm u}\quad&\calI(\bm X,\bm p)\\
  \operatorname{s.t.}\quad
  &\|\bm x_n\|_2\le\sqrt{T\alpha_{\rm dl}}\,p_n,
    \quad \forall n,\label{prob:pred-fh}\\
  &\left\|\col(\bm D_k\bm p,\sqrt{\Gamma_k}\sigma_v)\right\|_2
    \le u_k,\quad \forall k,\label{prob:pred-cm}\\
  &u_k\le g_{k,t}^{\pm}(\bm x(t)),
    \quad \forall k,t, \label{prob:pred-ci}\\
  &\|\bm v(\bm X,\bm p)\|_2
    \le\sqrt{P_{\max}}.\label{prob:pred-power}
    \end{align}
\end{subequations}
Note that the original CI constraints \eqref{eq:ci-soc} imply \eqref{prob:pred-cm} --
\eqref{prob:pred-ci} by choosing \(u_k=\|\col(\bm D_k\bm p,\sqrt{\Gamma_k}\sigma_v)\|_2\), and
the converse is immediate. Thus, \eqref{prob:pred} is equivalent to
\eqref{prob:original}, but replaces \(2KT\) SOCs by \(K\) SOCs and
\(2KT\) halfspaces.

%% file: sections/03_slbm_cvx.tex
\section{Proposed Solution Approach}
\label{sec:solution}
This section develops two complementary solution methods for
\eqref {prob:pred}: an SLBM-based feasible lower-bound design and an
SDR-based upper bound.
\subsection{SLBM-Based Lower-Bound Design}
\label{sec:slbm}
We first seek a feasible solution to the reduced problem
\eqref{prob:pred}. Its constraints are convex, but the objective
\eqref{eq:I-Xp} is a convex quadratic function whose maximization is
nonconvex. We therefore employ SLBM to construct a sequence of feasible
designs. At a feasible point \((\bm X^{(i)},\bm p^{(i)})\),
first-order convexity gives the affine global lower bound
\begin{align}
  \widehat{\calI}^{(i)}(\bm X,\bm p)
  \triangleq{}&\frac{2}{T}\Re\!\left\{
  \Tr[(\bm X^{(i)})^\HH\bm b\bm b^\HH\bm X]\right\}
  -\frac{1}{T}\|(\bm X^{(i)})^\HH\bm b\|_2^2\notag\\
  &+2(\bm p^{(i)})^\TT\bm D_b\bm p
  -(\bm p^{(i)})^\TT\bm D_b\bm p^{(i)}
  \le\calI(\bm X,\bm p),
\end{align}
which is tight and gradient consistent at the current point. With
\(\mathcal F_{\rm red}\) denoting the feasible set of \eqref{prob:pred},
the next iterate is obtained from
\begin{equation}
  (\bm X^{(i+1)},\bm p^{(i+1)},\bm u^{(i+1)})
  \in\arg\max_{(\bm X,\bm p,\bm u)\in\mathcal F_{\rm red}}
  \widehat{\calI}^{(i)}(\bm X,\bm p).
\end{equation}
This subproblem is a second-order cone program (SOCP) and is solved by CVX \cite{cvx}. Since the
surrogate is a global lower bound and is tight at the current point,
each update preserves feasibility and monotonically improves the
original objective.

The required feasible starting point is obtained by removing the power
constraint and solving the minimum-power SOCP
\begin{equation}
  P_0\triangleq
  \min_{\bm X,\bm p,\bm u}
  \left\{T^{-1}\|\bm X\|_F^2+\|\bm p\|_2^2:
  \eqref{prob:pred-fh}\text{--}\eqref{prob:pred-ci}\right\}.
\end{equation}
If this problem is infeasible or \(P_0>P_{\max}\), then
\eqref{prob:pred} is infeasible; otherwise, its optimizer initializes
\((\bm X^{(0)},\bm p^{(0)},\bm u^{(0)})\).

SLBM thus returns a feasible value satisfying
\(\calI_{\rm SLBM}\le\calI^\star\). However, because the original
maximization is nonconvex, the attained lower bound may depend on the
initialization and, by itself, does not reveal its distance from the
global optimum. This motivates the SDR upper bound developed next, whose gap from
\(\calI_{\rm SLBM}\) quantifies the quality of the feasible design.

%% file: sections/03_sdr_upper_bound.tex
\subsection{SDR-Based Upper Bound}
\label{sec:sdr}
To assess the quality of the SLBM design, we recast \eqref{prob:pred}
as an exact homogeneous quadratically constrained quadratic program
(QCQP), derive its rank-one lift, and relax the rank constraint to
obtain a computable global upper bound \cite{liu2024survey}.

\emph{1) Exact homogeneous QCQP and rank-one lift:}

Define \(\widetilde{\bm\xi}_t=T^{-1/2}
\col(\Re\{\bm x(t)\},\Im\{\bm x(t)\})\), 
\(\bm y=\col(\widetilde{\bm\xi}_1,\ldots,
\\\widetilde{\bm\xi}_T,\bm p)\in\RR^d\), where \(d=2NT+N\).
Let \(\bm A_n\) select antenna \(n\)'s \(2T\) real samples,
\(\bm e_n^p\) select \(p_n\), and \(\bm a_{k,t}^{\nu}\) satisfy
\((\bm a_{k,t}^{\nu})^\TT\bm y=g_{k,t}^{\nu}(\bm x(t))\) with $\nu=\pm$.
Set \(\bm F_n=\bm A_n^\TT\bm A_n-\alpha_{\rm dl}
\bm e_n^p(\bm e_n^p)^\TT\),
\(\bm H_k=\operatorname{blkdiag}
(\bm0_{2NT},\bm D_k^\TT\bm D_k)\), and
\(\bm C_{k,t}^{\nu}=\bm H_k-\bm a_{k,t}^{\nu}
(\bm a_{k,t}^{\nu})^\TT\). The physical constraints become
\begin{subequations}
\label{eq:sdr-qc}
\begin{align}
  \bm y^\TT\bm F_n\bm y&\le0,\quad
  (\bm e_n^p)^\TT\bm y\ge0, \quad \forall n,\\
  \bm y^\TT\bm C_{k,t}^{\nu}\bm y
    +\Gamma_k\sigma_v^2&\le0,\quad (\bm a_{k,t}^{\nu})^\TT\bm y \ge0, \quad \forall k,t,\nu,\\
  \bm y^\TT\bm y&\le P_{\max}.
\end{align}
\end{subequations}
The sign constraints prevent squaring from admitting the wrong CI
halfspace. With the complex-to-real conversion operator
\(\mathcal T(\bm A)=
\left[\begin{smallmatrix}\Re\bm A&-\Im\bm A\\
\Im\bm A&\Re\bm A\end{smallmatrix}\right]\), define
\(\bm Q_y=\operatorname{blkdiag}
(\mathcal T(\bm b\bm b^\HH),\ldots,
\mathcal T(\bm b\bm b^\HH),\bm D_b)\succeq\bm0\). Then
\(\calI(\bm X,\bm p)=\bm y^\TT\bm Q_y\bm y\), so
\eqref{eq:sdr-qc} is an exact homogeneous QCQP.

Homogenize with \(\bar{\bm y}=\col(\bm y,1)\) and
\(\bm Y=\bar{\bm y}\bar{\bm y}^\TT\). Define
\(\widehat{\bm Q}=\operatorname{blkdiag}(\bm Q_y,0)\),
\(\widehat{\bm F}_n=\operatorname{blkdiag}(\bm F_n,0)\),
\(\widehat{\bm C}_{k,t}^{\nu}=\operatorname{blkdiag}
(\bm C_{k,t}^{\nu},\Gamma_k\sigma_v^2)\), and
\(\widehat{\bm P}=\operatorname{blkdiag}(\bm I_d,-P_{\max})\).
Writing \(\bm y_Y=\bm Y_{1:d,d+1}\) and letting
\(\bm p_Y\) be its last \(N\) entries, the equivalent rank-one
lift of \eqref{prob:pred} is
\begin{subequations}
\label{eq:rank-lift}
\begin{align}
  \max_{\bm Y}\quad&\langle\widehat{\bm Q},\bm Y\rangle\\
  \operatorname{s.t.}\quad&
  \langle\widehat{\bm F}_n,\bm Y\rangle \le0, \quad (\bm e_n^p)^\TT\bm y_Y \ge0, \quad \forall n,\label{eq:rank-lift1}\\
  &\langle\widehat{\bm C}_{k,t}^{\nu},\bm Y\rangle \le0, ~ 
    (\bm a_{k,t}^{\nu})^\TT\bm y_Y \ge0,
    ~~  \forall k,t,\nu,\label{eq:rank-lift2}\\
  &\langle\widehat{\bm P},\bm Y\rangle \le0,\label{eq:rank-lift3}\\
  & Y_{d+1,d+1} =1,\quad \bm Y\succeq\bm0,\quad
  \rank(\bm Y)=1,\label{eq:rank-lift4}
\end{align}
\end{subequations}
where \(\langle\bm A,\bm B\rangle\) equals trace of \(\bm A^\TT\bm B\).

\emph{2) SOC-strengthened SDR:}

Simply dropping \(\rank(\bm Y)=1\) gives the standard SDR. For a
high-rank \(\bm Y\), however, its lifted quadratic
constraints restrict second-order moments but need not enforce the
original SOC geometry on the first-order vector \(\bm y_Y\).
We therefore strengthen the relaxation by directly retaining the
fronthaul, CI, and power SOC constraints on the last column. Defining
\(\bm d_{k,Y}=\col(\bm D_k\bm p_Y,\sqrt{\Gamma_k}\sigma_v)\), the
SOC-strengthened SDR is
\begin{subequations}
\label{prob:sdr}
\begin{align}
  \calI_{\rm SDR}\triangleq
  \max_{\bm Y}\quad&
  \langle\widehat{\bm Q},\bm Y\rangle\\
  \operatorname{s.t.}\quad&
  \eqref{eq:rank-lift1}-\eqref{eq:rank-lift4}\text{ except }\rank(\bm Y)=1,\notag\\
  &\|\bm A_n\bm y_Y\|_2
    \le\sqrt{\alpha_{\rm dl}}\,
       (\bm e_n^p)^\TT\bm y_Y,\quad\forall n,
       \label{eq:sdr-soc-fh}\\
  &\|\bm d_{k,Y}\|_2
    \le(\bm a_{k,t}^{\nu})^\TT\bm y_Y,
    \quad\forall k,t,\nu,\\
  &\|\bm y_Y\|_2\le\sqrt{P_{\max}}.
  \label{eq:sdr-soc-power}
\end{align}
\end{subequations}
The added SOCs are redundant at rank one but effective after
relaxation: they preserve every original feasible point while excluding
high-rank moment solutions whose last columns violate the original
SOCs. Hence, the resulting bound is no looser than the standard
SDR bound and remains a valid global upper bound. Combining it with the
SLBM lower bound gives
\(\calI_{\rm SLBM}\le\calI^\star\le\calI_{\rm SDR}\), so their gap
provides an a posteriori bound on the suboptimality of the feasible
design.

%% file: sections/06_simulations.tex
\section{Simulation Results and Comparisons}
\label{sec:simulations}

In this section, we conduct simulations to validate the proposed
fronthaul-limited SLP design. Unless otherwise stated, we consider
\(L=2\) TXs with \(N_t=8\) antennas each, one \(M=32\)-antenna sensing
RX, and \(K=12\) single-antenna users. Communication channels follow
independent Rayleigh fading with distance-dependent path loss, while the
target channel follows \eqref{eq:G}. We use quadrature PSK (QPSK), \(T=K=12\),
\(\Gamma_k=10\) dB, and \(P_{\max}=45\) dBm. The downlink and uplink
fronthaul capacities are set equal, i.e., \(C^{\rm dl}=C^{\rm ul}=C\),
with default value \(C=80\) Mbps. All results are averaged over 200 random trials.

We compare four schemes: the proposed SLP-SLBM feasible design, the
SLP-SDR upper bound, and two BLP baselines based on
\cite{zhang2025networked}. Specifically, BLP-SDR follows the SDR-based
BLP model with fronthaul compression, while zero-forcing BLP (BLP-ZF) uses zero-forcing
communication beams with an additional sensing beam. All schemes are
evaluated under the same fronthaul and power constraints.

Fig.~\ref{fig:pmax-sweep} compares the sensing SINR versus the transmit-power budget
\(P_{\max}\) from 25 dBm to 50 dBm. The proposed SLP-SLBM design closely
tracks the SLP-SDR upper bound and outperforms BLP-SDR and BLP-ZF,
especially in the low-to-moderate power regime. This is because the
proposed SLP design steers known multiuser interference into CI regions
instead of suppressing it, allowing more available power to be used for
target illumination. As \(P_{\max}\) increases, this advantage becomes
less pronounced because the power constraint is relaxed for all schemes.

\begin{figure}[t]
  \centering
  \includegraphics[width=0.99\columnwidth]{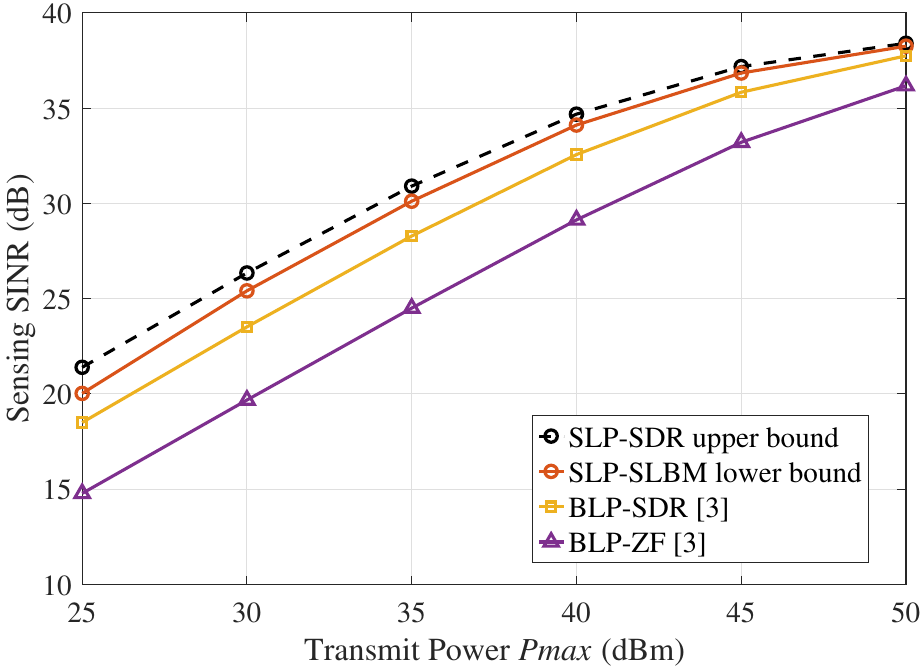}
  \vspace{-2.0ex}
  \caption{Sensing SINR versus transmit power.}
  \label{fig:pmax-sweep}
  \vspace{-2.0ex}
\end{figure}

Fig.~\ref{fig:resource-sweeps} compares the schemes under different fronthaul and antenna
resources. In Fig.~\ref{fig:resource-sweeps} (left), the advantage of SLP-SLBM over the BLP benchmarks grows with fronthaul capacity, since reduced compression
noise allows the SLP waveform to better exploit CI for sensing-oriented
shaping. In Fig.~\ref{fig:resource-sweeps} (right), SLP-SLBM remains close to the SLP-SDR upper bound and above the BLP benchmarks, showing that the proposed SLP design uses additional spatial degrees of freedom more effectively for CI control and target illumination.

\begin{figure}[t]
  \centering
  \includegraphics[width=0.99\columnwidth]{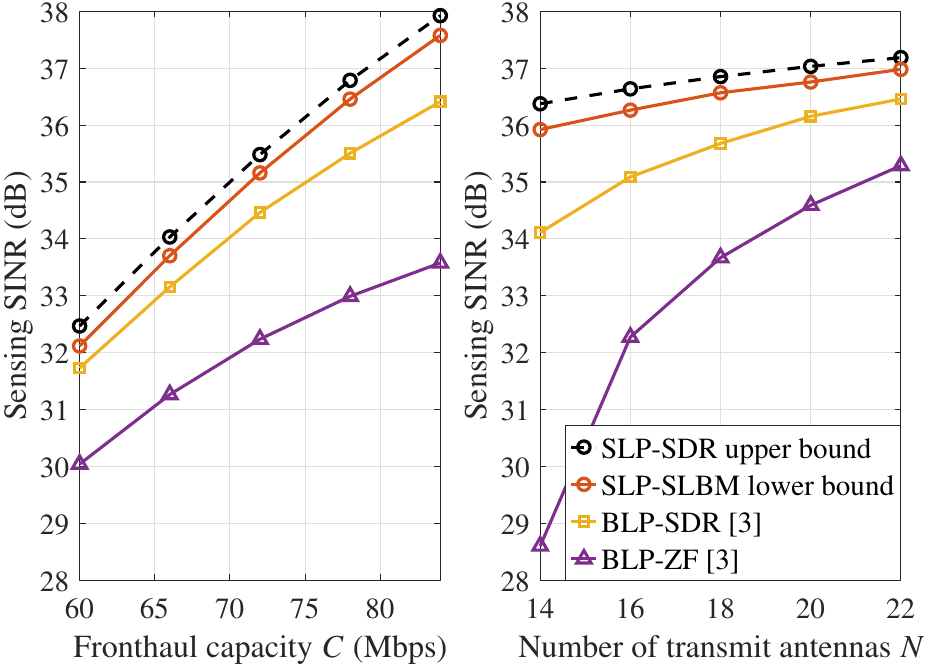}
  \vspace{-2.0ex}
  \caption{Sensing SINR versus fronthaul capacity and the total number
of transmit antennas.}
  \label{fig:resource-sweeps}
  \vspace{-2.0ex}
\end{figure}